\newif\iftwocolumn
\newif\ifhyper
\def\figsizeb{\linewidth}
\def\figsizeb{0.55\linewidth}
\def\figsizec{0.65\linewidth}
\let\csname equation*\endcsname\relax 
\let\csname endequation*\endcsname\relax 
\newtheorem{thm}{Theorem}
\begin{document}

\title[Multifractals Competing with Solitons on Fibonacci Optical Lattice]{Multifractals Competing with Solitons on Fibonacci Optical Lattice}

\author{M.~Takahashi$^1$, H.~Katsura$^1$, M.~Kohmoto$^2$, and T.~Koma$^1$}

%
\address{$^1$Department of Physics, Gakushuin University, Tokyo 171-8588, Japan}

\address{$^2$Institute for Solid State Physics, University of Tokyo, Kashiwa, Chiba 277-8581, Japan}

\eads{\mailto{masahiro.takahashi@gakushuin.ac.jp} \\ \mailto{hosho.katsura@gakushuin.ac.jp} \\
\mailto{kohmoto@issp.u-tokyo.ac.jp} \\ 
\mailto{tohru.koma@gakushuin.ac.jp}}

\date{\today}

\begin{abstract}
We study the stationary states for the nonlinear Schr\"odinger equation on the Fibonacci lattice 
which is expected to be realized by Bose-Einstein condensates loaded into an optical lattice. 
When the model does not have a nonlinear term, the wavefunctions and the spectrum are known 
to show fractal structures. Such wavefunctions are called critical. 
We present a phase diagram of the energy spectrum for varying the nonlinearity. 
It consists of three portions, 
a forbidden region, the spectrum of critical states, and the spectrum of stationary solitons.
We show that the energy spectrum of critical states remains intact 
irrespective of the nonlinearity in the sea of a large number of stationary solitons.
\end{abstract}

\pacs{03.75.Hh, 03.75.Lm, 67.85.Hj, 05.30.Jp}
\submitto{\NJP}
\maketitle

\section{Introduction}
The realization of Bose-Einstein condensation (BEC) in optical lattices has opened a new avenue 
for studying a variety of phenomena in condensed matter systems~\cite{Bloch_NP_2005_v0, Morsch_RMP_2006_v0}. 
A major advantage of using ultracold atomic gases 
is that one can control the interatomic interactions and the lattice parameters 
in an extremely clean environment. 
This high degree of tunability enables one to study BEC in artificially designed structures 
which cannot be achieved in conventional solids. 
For instance, the one-dimensional bichromatic potential 
has been realized in a system of $^{87}$Rb atoms~\cite{Fallani_PRL_2007_v0}, 
and such a quasiperiodic potential has been studied theoretically~\cite{Larcher_PRA_2009_v0} 
in the context of BEC. 
More recently, a new method for creating potentials through a holographic mask 
was introduced~\cite{Bakr_Nature_2009_v0}. 
Using this technique, it seems feasible to experimentally generate exotic structures
such as the Fibonacci lattice~\cite{Kohmoto_PRL_1983_v0, Ostlund_PRL_1983_v0, Kohmoto_PRB_1987_v0} 
and the Penrose tiling~\cite{Penrose_Bull.-Inst.-Math.-Appl._1974_v0, Gardner_Sci.-Am._1977_v0} 
which are of interest as one- and two-dimensional quasicrystals, respectively.

These low-dimensional quasicrystals have attracted considerable theoretical attention 
since there appear fractal wavefunctions  which are neither extended nor localized 
\cite{Kohmoto_PRB_1987_v0, Suto_JSP_1989_v0}. 
They are called critical states. The rapid progress in the study of BEC can set  
a stage for exploring these exotic states in experiments. 
The new ingredient that appears in the system of BEC is the nonlinearity caused 
by interatomic interactions which can be tuned by the Feshbach resonance. 
One might expect that 
fractal wavefunctions are fragile and are easily destroyed by the nonlinearity. 
Surprisingly, this is not the case. 
In fact, we demonstrate that there indeed exist critical states on the Fibonacci optical lattice. 
This is intended to stimulate experimental efforts to observe critical states 
in a cold-atom setup.\footnote{
Quite recently, disorder effects were studied for 
transport in photonic quasicrystals \cite{Levi_Science_2011_v0}. 
Their experimental results show that certain disorder enhances the transport.
}

In order to describe a Bose-Einstein condensate on the Fibonacci optical lattice, 
we resort to the nonlinear Schr\"odinger equation with on-site potentials arranged 
in the Fibonacci sequence~\cite{Johansson_PRB_1994_v0}.
In the absence of the nonlinear term, 
it is known that all the eigenstates are critical, 
and that the spectrum shows a fractal structure~\cite{Kohmoto_PRB_1987_v0, Suto_JSP_1989_v0}. 
More precisely, the spectrum is 
singular continuous~\cite{Suto_JSP_1989_v0,Kohmoto_Physics-Letters-A_1984_v0} 
and called the Cantor spectrum. 
In order to elucidate the effect of nonlinearity on 
the critical states, 
we numerically solve the stationary nonlinear Schr\"odinger equation. 
As mentioned above, our numerical results show that 
the critical states persist despite the presence of the nonlinearity in the sea of  
stationary solitons~\cite{Kivshar__2003_v0, Johansson_PRB_1994_v0}. 
With the aid of mathematical tools, we show that for any critical state, 
the ``eigenenergy" must be included in the Cantor spectrum of the 
Schr\"odinger equation without nonlinearity. 
Further, we determine a forbidden region for the eigenenergy and the strength 
of the nonlinearity. Putting these together, we present a phase diagram of 
the energy spectrum for varying the nonlinearity. (See Fig.~\ref{fig:phase} 
in Sec.~\ref{PhaseDiagram}.)  
The energy spectrum of the critical states retains its profile 
irrespective of the nonlinearity, 
while the number of the stationary solitons increases enormously 
as the nonlinearity increases. 
One might think that the presence of the sea of solitons  
makes it difficult to experimentally detect the fractal profiles of the critical states. 
However, in the neighborhood of the forbidden region, 
such an experimental detection is expected to be possible.  
(See Sec.~\ref{PhaseDiagram} and Fig.~\ref{fig:phase} for details.) 

Throughout the present paper, we will not treat dynamical properties of the wavefunctions 
for the nonlinear Schr\"odinger equation. However, we think we should at least stress that 
knowledge of the stationary states is not sufficient for understanding the dynamics
such as diffusion of wave packets in quasiperiodic or 
random environment~\cite{Pikovsky_PRL_2008_v0, Wang_JSP_2009_v0, Flach_PRL_2009_v0, 
Skokos_PRE_2009_v0, Fishman_Nonlinearity_2009_v0, Krivolapov_NJP_2010_v0}. 
For the random nonlinear Schr\"odinger equation, 
see a recent review~\cite{Fishman_Nonlinearity_2012_v0}.  

The present paper is organized as follows: The precise definition 
of the nonlinear Schr\"odinger equation 
which we consider in the present paper is given in Sec.~\ref{sec:Preliminary}. 
In Sec.~\ref{sec:NA}, the numerical solutions of the model are obtained. 
In Sec.~\ref{sec:MulA}, we apply the multifractal 
analysis to the solutions so obtained. The mathematical analysis for 
the model is given in Sec.~\ref{sec:MaA}. Our results are summarized as 
the phase diagram in Sec.~\ref{PhaseDiagram}. 
Section~\ref{SumCon} is devoted to summary and conclusion. 
In \ref{AppendixA}, we discuss effects of nonlinearity on localization.

\section{Preliminary} \label{sec:Preliminary}
Let us consider the one-dimensional chain which is determined
by the Fibonacci rule. The $\ell$-th chain, $S_\ell$, $\ell \! = \! 1, 2, \ldots$, consists of
two symbols, $A$ and $B$, and is constructed by the recursion,
$S_{\ell \! + \! 1} \! = \! S_\ell S_{\ell \! - \! 1}$, with the initial condition, $S_0 \! = \! B$ 
and $S_1 \! = \! A$:
\begin{align}
S_1\!=\!A, ~~ S_2\!=\!AB, ~~ S_3\!=\!ABA, ~~ S_4\!=\!ABAAB, ~ \ldots \nonumber
\end{align}
We denote by $N \! = \! N_\ell$ the number of symbols in $S_\ell$. Clearly,
$N_\ell$ is equal to the Fibonacci number because they satisfy
$N_{\ell \! + \! 1} \! = \! N_\ell \! + \! N_{\ell \! - \! 1}$ with $N_0 \! = \! N_1 \! = \! 1$. 
In the limit $\ell \! \rightarrow \! \infty$, the Fibonacci chain is neither random nor periodic. 
In fact, it is \textit{quasiperiodic}.

The nonlinear Schr\"odinger equation for stationary states,
$(\psi_1,\psi_2,\ldots,\psi_N)$, on the Fibonacci chain is given by
\begin{align}
t(\psi_{i+1} + \psi_{i-1} ) + V_i \psi_i + g|\psi_i|^2\psi_i=E\psi_i
\quad \mbox{for} \ \ i=1,2,\ldots,N, 
\label{eq:gp}
\end{align}
where the hopping integral $t$ and the coupling constant $g$ are real, 
$E$ is an ``eigenenergy"~\cite{Albanese_Communications-in-Mathematical-Physics_1988_v0, 
Albanese_Communications-in-Mathematical-Physics_1988_v1, 
Albanese_Communications-in-Mathematical-Physics_1991_v0}, 
and the on-site potential $V_i$ is given by
\begin{align}
V_i=
\begin{cases}
V_A & \text{if $i$-th symbol of $S_\ell$ is $A$,}\\
V_B & \text{if $i$-th symbol of $S_\ell$ is $B$}
\end{cases}
\nonumber
\end{align}
with real $V_A$ and $V_B$. 
We impose the Dirichlet boundary conditions, $\psi_0 \! = \! \psi_{N \! + \! 1} \! = \! 0$, 
or the periodic boundary condition, $\psi_{N \! + \! i} \! = \! \psi_i$. 
We choose the normalization of the wavefunctions as $\sum_{i=1}^N|\psi_i|^2 \! = \! 1$.

\section{Numerical analysis}
\label{sec:NA} 
Using the shooting method, we numerically solve
~\eqref{eq:gp} 
with the Dirichlet boundary condition. We choose $V_A \! = \! - \! 1$, $V_B \! = \! + \! 1$ and $t \! = \! 1$.
For fixed $E$ and $g$, we continuously vary the amplitude $\psi_1$ so as to hit $\psi_{N \! + \! 1} \! = \! 0$ at 
site $N \! + \! 1$. 
A solution so obtained does not satisfy $\sum_{i=1}^N|\psi_i|^2 \! = \! 1$ in general.
So we set $\psi_i \! = \! \lambda\psi_i'$ with $\lambda>0$
so as to satisfy $\sum_{i=1}^N|\psi_i'|^2 \! = \! 1$. Then the coupling constant 
is given by $g' \! = \! \lambda^2g$.
In the following we drop primes. 

\begin{figure}[!ht]
\centering
\includegraphics[width=\figsizeb]{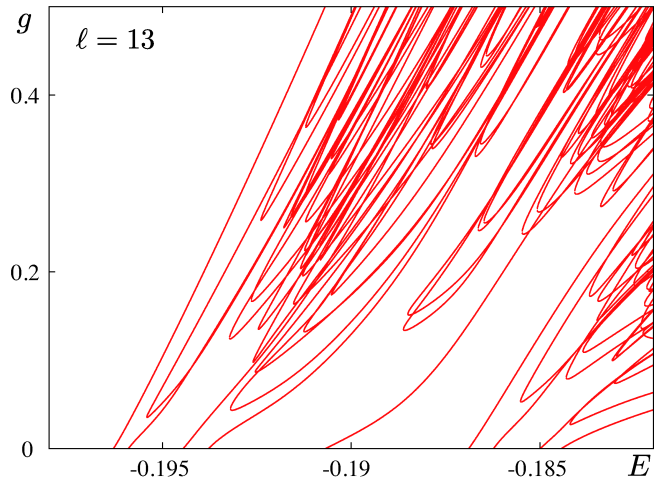}
\caption
[Energy v.s. coupling]
{
Numerical solutions of
~\eqref{eq:gp} are plotted 
in the $E$-$g$ plane near the band center 
for 
chain length $N_\ell \! = \! 377$ with $\ell \! = \! 13$.}
\label{fig:E-g}
\end{figure}

The numerical results for eigenenergies $E$ are plotted in the $E$-$g$ plane
in Fig.~\ref{fig:E-g}.
Each eigenenergy $E$ is an increasing function of $g$. 
This is a consequence of the positivity of the nonlinear term.
Unlike the linear Schr\"odinger equation, there appear many solutions
whose number is
beyond the total number of the sites of the chain $N$. 
These contain a large number of localized states.
Since the localized states are due to the nonlinearity, 
we call them solitons.
This phenomenon is already known as the appearance of 
localized states 
for lattice systems~\cite{Kivshar__2003_v0, Johansson_PRB_1994_v0}.
There are solutions with different spatial profiles
as shown in Figs.~\ref{fig:psi}.
It seems likely that all the states, extended, critical, soliton, 
and surface states, mix like a complicated ``soup".
However, we will show below that extended states are absent, 
whereas critical states and solitons exist.
The surface states exist only for a system with the Dirichlet boundary condition.

For the Harper model 
\cite{Harper_Proceedings-of-the-Physical-Society.-Section-A_1955_v0,
Aubry_Ann.-Isr.-Phys.-Soc._1980_v0} 
at criticality, effects of nonlinearity were 
discussed in \cite{Manela_New-Journal-of-Physics_2010_v0}. 
Their numerical results show that the eigenmodes can be classified into 
two families: the conventional modes which are continuously connected to those in 
the corresponding linear model and new modes that have no origin in the linear model. 
If the latter modes are interpreted as stationary solitons due to the nonlinearity as 
in our case, their results of \cite{Manela_New-Journal-of-Physics_2010_v0} are consistent with ours.  

\begin{figure}[!ht]
\centering
\includegraphics[width=\figsizec]{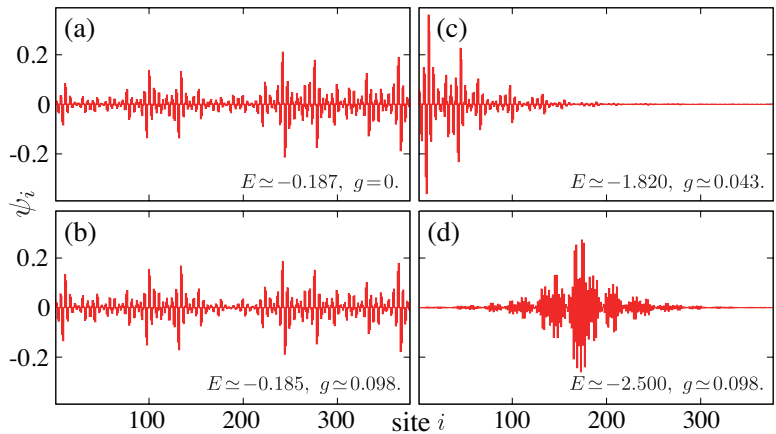}
\caption
[wavefunction]
{The left column shows 
critical wavefunctions in (a)~
a linear case and (b)~
a nonlinear case.
The right column shows 
localized wavefunctions of (c)~
a surface state and (d)~
a soliton in 
a nonlinear case. All the data are for 
chain length $N_\ell \!=\!377$ with $\ell\!=\!13$.}
\label{fig:psi}
\end{figure}

%

\section{Multifractal analysis}
\label{sec:MulA} 
As is well known, the scaling analysis which is 
called multifractal analysis~\cite{Halsey_PRA_1986_v0, Kohmoto_PRA_1988_v0} 
is a very useful tool to determine 
whether or not a given wavefunction is critical~\cite{Hiramoto_IJMPB_1992_v0}. 
Relying on this analysis, we numerically check the existence of critical states.
We expect that the wavefunction which is shown in Fig.~\ref{fig:psi}~(b) with $g\!\ne \!0$
leads to a critical state in the infinite-length limit 
of the chain. 
In the multifractal analysis, a wavefunction is characterized by $f(\alpha)$. 
For a wavefunction $\bm{\psi} \! = \! (\psi_1,\ldots,\psi_N)$,
the number of sites $i$ satisfying
$|\psi_i|^2 \! \sim \! N^{-\alpha}$ is assumed to be proportional to $N^{f_\ell(\alpha)}$. 
The subscript $\ell$ and the chain length $N$ are related through $N \! = \! N_\ell$ 
with the Fibonacci number $N_\ell$ whose definition is given in Sec.~\ref{sec:Preliminary}.  
We denote by $f (\alpha)$ the infinite-length limit of $f_\ell (\alpha)$.
Using $f(\alpha)$, one can determine the character of a given wavefunction 
as follows: 
\begin{itemize}
\item An extended state shows a single point, $(f,\alpha) \! = \! (1, 1)$. 
\item A localized state shows two points, $(f,\alpha) \! = \! (0,0)$ and $(1,\infty)$. 
\item A critical state shows a sequence of smooth curves $f_\ell(\alpha)$ 
which do not fall into the above two categories in the infinite-length limit. 
\end{itemize}
As to critical states, the corresponding sequence does not necessarily converge to 
some $f(\alpha)$. 
But, it is known that $f_\ell(\alpha)$ for a wavefunction with a multifractal character 
converges to a single smooth curve $f (\alpha)$. Our numerical results of
$f_\ell(\alpha)$ are shown in Fig.~\ref{fig:fa}.
The results show that the wavefunctions are critical 
because the behavior of the sequence of $f_\ell (\alpha)$ is totally different 
from those of $f_\ell (\alpha)$ for extended and localized states~\cite{Hiramoto_IJMPB_1992_v0}.
In consequence, the critical states persist in spite of the nonlinear term. 

However, it is still unclear whether or not the state shows
perfect multifractality~\cite{Fujiwara_PRB_1989_v0}
within our numerical analysis since the right side of the profile of $f_\ell (\alpha)$ 
is oscillating as $\ell$ increases,
and does not seem to converge to a single smooth curve in the limit $\ell\rightarrow\infty$. 
The oscillation in Fig.~\ref{fig:fa} can be explained as the effect of
the Dirichlet boundary condition as follows:
We treat only the case of the linear Schr\"odinger equation, 
i.e., \eqref{eq:gp} with $g=0$.  
We impose the periodic boundary condition. Let ${\bm\psi}\!=\!(\psi_1,\ldots,\psi_N)$ and
${\bm\psi}'\!=\!(\psi_1',\ldots,\psi_N')$ be two independent eigenvectors 
with the eigenvalues, $E$ and $E'$, respectively. We assume that $E$ is nearly equal to $E'$,
and that $|\psi_1|^2\sim N^{-\alpha_1}$ and $|\psi_1'|^2\sim N^{-\alpha_1'}$
with $\alpha_1>\alpha_1'>0$. Consider
\begin{align}
{\bm\varphi}=\bm{\psi}-\frac{\psi_1}{\psi_1'}{\bm\psi}'.
\label{phidef}
\end{align}
Then, ${\bm\varphi}$ satisfies the Dirichlet boundary condition, $\varphi_1=\varphi_{N+1}=0$.
We assume that both of ${\bm\psi}$ and ${\bm\psi}'$ show the same $f_\ell(\alpha)$. 
In order to distinguish $f_\ell(\alpha)$ of the wavefunction ${\bm\varphi}$ from 
that of $\bm{\psi}$ or ${\bm\psi}'$, we write ${\tilde f}_\ell(\alpha)$ for 
$f_\ell(\alpha)$ of ${\bm\varphi}$. 

We want to show that ${\tilde f}_\ell(\alpha)$ is not necessarily equal to $f_\ell(\alpha)$, 
and the deviation strongly depends on $\alpha_1$ and $\alpha_1'$. 
Consider the probability density $|\varphi_i|^2$ at the site $i$. 
The contribution of the first term $\bm{\psi}$ in the right-hand side of (\ref{phidef}) is 
written 
\begin{equation}
|\psi_i|^2\sim N^{-\alpha_i}
\end{equation}
with some $\alpha_i>0$. The contribution of the second term is written 
\begin{equation}
\left|\frac{\psi_1}{\psi_1'}\psi_i'\right|^2\sim 
N^{-(\alpha_1-\alpha'_1)}N^{-\alpha'_i}=N^{-(\alpha_i'+\Delta\alpha)}
\label{2ndbehav}
\end{equation}
with $\Delta\alpha:=\alpha_1-\alpha_1'$, where we have assumed 
\begin{equation}
\left|\psi_i'\right|^2\sim N^{-\alpha_i'} 
\end{equation}
with some $\alpha_i'>0$. Therefore, we naively expect that the number of the site $i$ 
satisfying $|\varphi_i|^2\sim N^{-\alpha}$ is given by 
\begin{equation}
N^{{\tilde f}_\ell(\alpha)}\sim 
\begin{cases} 
N^{f_\ell(\alpha)} & \mbox{if} \ \ f_\ell(\alpha)\ge f_\ell(\alpha-\Delta\alpha),\\
N^{f_\ell(\alpha-\Delta\alpha)} & \mbox{if} \ \ f_\ell(\alpha)<f_\ell(\alpha-\Delta\alpha).
\end{cases}
\label{Nfcases}
\end{equation}
However, the behavior in the second case is questionable. 
Clearly, from (\ref{2ndbehav}), the second case occurs for large $\alpha$. 
Further, if $\alpha=\alpha_i'+\Delta\alpha>\alpha_i$, then one has 
$|\varphi_i|^2\sim N^{-\alpha_i}$. 
In such a case, the second term in the right-hand side of (\ref{phidef})
does not contribute to ${\tilde f}_\ell(\alpha)$ although $\alpha_i'+\Delta\alpha=\alpha$. 
{From} these observations, we conclude that 
$N^{f_\ell(\alpha-\Delta\alpha)}$ in the second case of (\ref{Nfcases}) is reduced 
to some value $N^{{\tilde f}_\ell(\alpha)}$ which satisfies 
\begin{equation}
N^{f_\ell(\alpha)}\le N^{{\tilde f}_\ell(\alpha)}\le N^{f_\ell(\alpha-\Delta\alpha)}.
\end{equation}
In fact, as is well known, ${\tilde f}_\ell(\alpha)$ must become a smooth curve 
having  a single peak. To summarize, we obtain 
\begin{equation}
\begin{cases}
{\tilde f}_\ell(\alpha)=f_\ell(\alpha) & \mbox{if}\ \ f_\ell(\alpha)\ge f_\ell(\alpha-\Delta\alpha),\\ 
f_\ell(\alpha)\le {\tilde f}_\ell(\alpha)\le f_\ell(\alpha-\Delta\alpha) 
& \mbox{if}\ \ f_\ell(\alpha)<f_\ell(\alpha-\Delta\alpha).  
\end{cases}
\end{equation}
This explains the oscillation of $f_\ell(\alpha)$ on the right side of
the profile in Fig.~\ref{fig:fa}.
Thus the oscillation of the profile of the right side is due to the Dirichlet boundary condition. 

\begin{figure}[!ht]
\centering
\includegraphics[width=\figsizec]{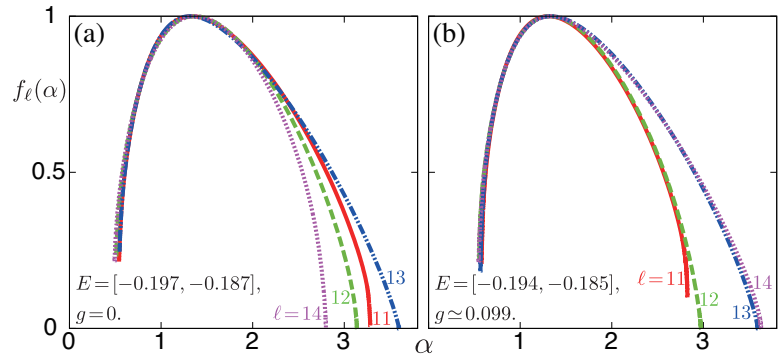}
\caption
[f-a]
{(color online)
$f_\ell(\alpha)$ in (a)~linear cases
and (b)~nonlinear cases. These numerical results show
that the corresponding states are critical.
Even in the linear case, $f_\ell(\alpha)$ oscillates 
as $\ell$ increases.}
\label{fig:fa}
\end{figure}

\section{Mathematical analysis}
\label{sec:MaA}
We want to elucidate the spectral structure of Fig.~\ref{fig:E-g}, 
and extrapolate it to the infinite-length limit of the chain (Fig.~\ref{fig:phase}). 

First of all, it is practical to review the known results in the linear case, $g \! = \! 0$. 
Consider first the linear Schr\"odinger equation 
with the periodic boundary condition. In the infinite-length limit, 
all of the stationary states are critical and the energy spectrum becomes a singular continuous 
Cantor set. 
On the other hand, as to the Dirichlet boundary condition, 
the surface states appear and their eigenenergies form a pure point spectrum 
in addition to the Cantor spectrum of the critical states. 

In this section, we prove three theorems for the nonlinear Schr\"odinger equation: 
Theorem~\ref{thm:forbidden} states that there exists a forbidden region for 
two parameters, energy and nonlinearity, 
such that the nonlinear Schr\"odinger equation~\eqref{eq:gp} with the periodic boundary condition 
has no solution. In the infinite-length limit, this forbidden region for the spectrum 
is common to both of the periodic and Dirichlet boundary conditions except for 
the spectrum of the surface states as we will show in Theorem~\ref{thm:surface}. 
Theorem~\ref{thm:zeroeffect} states that an eigenenergy of a critical or extended state 
for the nonlinear Schr\"odinger equation is included in the spectrum  
in the case of $g \! = \! 0$ in the infinite-length limit. 
This leads to the robustness of the critical states irrespective of the nonlinearity. 

Consider the \textit{linear} Schr\"odinger equation,
\begin{align}
t(\psi_{i+1} + \psi_{i-1} ) + V_i \psi_i=E\psi_i,
\label{eq:LSE}
\end{align}
i.e., 
\eqref{eq:gp} with $g \! = \! 0$
and with the periodic boundary condition. 
We denote the Hamiltonian for \eqref{eq:LSE} by $H_0$. 
We also denote the $N$ eigenvalues by $\mathcal{E}_m$, $m \! = \! 1,2,\ldots,N$, satisfying
$\mathcal{E}_j \! \le \! \mathcal{E}_k$ for $j \! < \! k$. Define a set of real numbers, 
\begin{align}
\Gamma&:=(-\infty,\mathcal{E}_1)\cup(\mathcal{E}_1 + g,\mathcal{E}_2)\cup 
\cdots\cup(\mathcal{E}_{N-1} + g,\mathcal{E}_N)\cup(\mathcal{E}_N + g, + \infty), 
\label{eq:gamma}
\end{align}
for $g \! \ge \! 0$, where $(\cdot,\cdot)$ denotes an open interval. 
Here, if $g\ge\mathcal{E}_{n \! + \! 1}-\mathcal{E}_n$, then
$(\mathcal{E}_n \! + \! g,\mathcal{E}_{n \! + \! 1})=\emptyset$ 
for $n \! = \! 1,2,\ldots,N \! - \! 1$.

\begin{thm}
Let $g>0$. Let $E$ be an eigenenergy of
the nonlinear Schr\"odinger equation~\eqref{eq:gp} 
with the periodic boundary condition. 
Then, $E \! \notin \! \Gamma$.
\label{thm:forbidden}
\end{thm}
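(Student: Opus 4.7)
The plan is to linearize the nonlinear Schr\"odinger equation at the solution $\psi$ itself, then apply the standard min-max principle to the resulting Hermitian matrix. Concretely, given a stationary state $\psi$ with $\sum_i |\psi_i|^2 = 1$, I would introduce the diagonal matrix $D(\psi) := \mathrm{diag}(|\psi_1|^2,\ldots,|\psi_N|^2)$, so that \eqref{eq:gp} with the periodic boundary condition becomes the \emph{linear} eigenvalue equation $(H_0 + g D(\psi))\psi = E\psi$. Thus $E$ is one of the $N$ eigenvalues of the Hermitian matrix $H_0 + g D(\psi)$, which I will denote $\lambda_1 \le \lambda_2 \le \cdots \le \lambda_N$.

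Next I would establish the operator inequalities $0 \le D(\psi) \le I$. Nonnegativity is immediate. The upper bound follows from the normalization: each diagonal entry $|\psi_i|^2$ is bounded by $\sum_j |\psi_j|^2 = 1$, so that $\langle \phi, D(\psi)\phi \rangle = \sum_i |\psi_i|^2 |\phi_i|^2 \le \|\phi\|^2$ for every $\phi$. Multiplying by $g>0$ gives $H_0 \le H_0 + g D(\psi) \le H_0 + g I$ in the operator sense. By the monotonicity version of the min-max principle (equivalently Weyl's inequality for sums of Hermitian matrices), the $n$-th eigenvalue of the perturbed matrix satisfies $\mathcal{E}_n \le \lambda_n \le \mathcal{E}_n + g$ for every $n = 1,2,\ldots,N$.

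Since $E$ must coincide with some $\lambda_n$, it lies in $[\mathcal{E}_n,\mathcal{E}_n + g]$. The complement in $\mathbb{R}$ of $\bigcup_{n=1}^N [\mathcal{E}_n,\mathcal{E}_n + g]$ is precisely the set $\Gamma$ in \eqref{eq:gamma}, where the middle intervals $(\mathcal{E}_n + g,\mathcal{E}_{n+1})$ are automatically empty whenever consecutive intervals $[\mathcal{E}_n,\mathcal{E}_n+g]$ and $[\mathcal{E}_{n+1},\mathcal{E}_{n+1}+g]$ overlap. This yields $E \notin \Gamma$.

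I do not anticipate a serious obstacle here; the main conceptual point is simply recognizing that a nonlinear stationary state is a genuine eigenvector of the \textbf{linearized operator at itself}, after which the argument reduces to standard spectral perturbation theory. The only spot requiring care is verifying that the normalization condition forces $D(\psi) \le I$ as an operator, rather than merely $\mathrm{tr}\, D(\psi) = 1$; this is exactly what makes the uniform shift by $g$ (and not some larger bound) the correct one and produces $\Gamma$ on the nose.
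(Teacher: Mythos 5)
Your proposal is correct and follows essentially the same route as the paper's proof: both freeze the nonlinearity into a fixed diagonal potential $gU$ with $0\le U\le I$ (using $|\psi_i|^2\le\sum_j|\psi_j|^2=1$), view the stationary state as a genuine eigenvector of $H_0+gU$, and sandwich the eigenvalues between those of $H_0$ and $H_0+g$ via the min-max principle. Your final bookkeeping, placing $E=\lambda_n\in[\mathcal{E}_n,\mathcal{E}_n+g]$ directly and noting $\mathbb{R}\setminus\Gamma=\bigcup_n[\mathcal{E}_n,\mathcal{E}_n+g]$, is marginally more streamlined than the paper's two-step localization of the index $n$, but the substance is identical.
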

\begin{proof}
Let $(\tilde{\psi}_1,\ldots,\tilde{\psi}_N)$ be a solution of 
the nonlinear Schr\"odinger equation~\eqref{eq:gp}
with the eigenenergy $\tilde{E}$. 
As in the assumption, we impose the periodic boundary condition.
Then one can find $n\in \! \{1, 2, \ldots, N \}$ which satisfies 
$\mathcal{E}_n \! \le \! \tilde{E} \! < \! \mathcal{E}_{n\!+\!1}$, 
where $\mathcal{E}_{N\!+\!1} \! := \! + \! \infty$. 
Therefore, it is sufficient to show that $\tilde{E} \! \le \! \mathcal{E}_n \! + \!g$.
Write $U_i \! = \! |\tilde{\psi}_i|^2$, and consider the \textit{linear} Schr\"odinger equation,
\begin{align}
t(\psi_{i+1} + \psi_{i-1})+ V_i \psi_i + g U_i \psi_i = E\psi_i,
\label{eq:LSEU}
\end{align}
with the additional potential $gU_i$, and with the periodic boundary condition. 
Namely, we fix the additional potential $gU_i$ by using the solution 
$(\tilde{\psi}_1,\ldots,\tilde{\psi}_N)$ of the nonlinear Schr\"odinger equation. 
Clearly, $(\tilde{\psi}_1,\ldots,\tilde{\psi}_N)$
is a particular solution of the equation (\ref{eq:LSEU}) with the eigenvalue $\tilde{E}$.
We denote by $\tilde{H}$ the Hamiltonian for~\eqref{eq:LSEU}, and
denote the $N$ eigenvalues by $\tilde{\mathcal{E}}_m$, $m \! = \! 1,2,\ldots,N$, satisfying
$\tilde{\mathcal{E}}_j \! \le \! \tilde{\mathcal{E}}_k$ for $j \! < \! k$.

Note that the Hamiltonian $\tilde{H}$ is written as 
$\tilde{H} \! = \! H_0 \! + \! g U$ 
with $U\!=\! \textrm{diag}(U_1,\!U_2,\!\ldots,\!U_N)$.
By the positivity of $gU$, we have 
$H_0 \! \le \! \tilde{H}$.\footnote{
When two self-adjoint operators, 
$A$ and $B$, satisfy $\left\langle \varphi, A\varphi\right\rangle
\le \left\langle \varphi,B\varphi\right\rangle$  
for any state $\varphi$, we write $A\le B$.
}
As is well known, 
by applying the min-max principle\footnote{
See, e.g., Sec.~XIII.1 
in the book~\cite{ReedSimonIV}.
} to this type of an operator 
inequality, one can get an inequality between their eigenvalues.   
In the present case, we obtain $\mathcal{E}_{n \! + \! 1} \! \le \! \tilde{\mathcal{E}}_{n \! + \! 1}$.
Combining this,
the assumption $\tilde{E} \! < \! \mathcal{E}_{n \! + \! 1}$ and the fact that $\tilde{E}$ is
the eigenvalue of $\tilde{H}$, we obtain $\tilde{E} \! \le \! \tilde{\mathcal{E}}_n$.

On the other hand, we have the bound, $\tilde{H} \! \le \! H_0 \! + \! g$. 
Applying the min-max principle again, we obtain $\tilde{\mathcal{E}}_n \! \le \! \mathcal{E}_n \! + \! g$.
Combining this with the above result $\tilde{E} \! \le \! \tilde{\mathcal{E}}_n$,
we obtain the desired bound $\tilde{E} \! \le \! \mathcal{E}_n \! + \!g$.
\end{proof}

Thus the nonlinear Schr\"odinger equation~\eqref{eq:gp} has no solution in the region 
\begin{equation}
\tilde{\Gamma} \! := \! \{(E,g)|\ E \! \in \! \Gamma,g \! \ge \! 0\}
\label{eq:gammatilde}
\end{equation}
for two parameters, energy and nonlinearity. 
The region $\tilde{\Gamma}$ is depicted as the non-colored region
in Fig.~\ref{fig:phase}. 
by replacing ``periodic" to ``Dirichlet" in the proof.
The forbidden region of the energy spectrum for the Dirichlet boundary condition 
will be treated in Theorem~\ref{thm:surface} below.
One might think that  Theorem~\ref{thm:forbidden} is not too surprising because 
the deviation of the eigenenergy is less than or equal to $g$ 
from $U_i \! = \! |\psi_i|^2 \! \le \! 1$. 
We stress that the statement of Theorem~\ref{thm:forbidden} includes 
that all the eigenenergies of the stationary solitons which are caused 
by the nonlinearity are also forbidden in the region $\tilde{\Gamma}$. 
The key idea of the proof is to introduce the potential $gU$ into the linear 
Schr\"odinger equation. This enables us to apply the min-max principle to 
nonlinear problems for the first time.

In order to determine whether a given state is critical or not, we must treat
the infinite-length limit. 
Let $\{N(k)\}_{k \! = \! 1}^\infty$ be an increasing sequence of the length $N \! = \! N(k)$ of
the present chain. 
Let $\tilde{\bm{\psi}}^{(k)} \! = \! (\tilde{\psi}_1^{(k)}, \tilde{\psi}_2^{(k)}, \ldots,
\tilde{\psi}_{N(k)}^{(k)})$ be a solution of the nonlinear Schr\"odinger equation
\eqref{eq:gp} with the eigenenergy $\tilde{E}^{(k)}$ for the chain with
the length $N(k)$. Write $U_i^{(k)} \! = \! |\tilde{\psi}_i^{(k)}|^2$, and
$U_\mathrm{max}^{(k)} \! = \! \max_i U_i^{(k)}$. We assume that
$\lim_{k\!\rightarrow\!\infty} U_\mathrm{max}^{(k)}$ exists. If not so, we take a subsequence.
We say that the wavefunction $\tilde{\bm{\psi}}^{(k)}$ is localized
if $\lim_{k\!\rightarrow\!\infty} U_\mathrm{max}^{(k)} \! > \! 0$,
and $\tilde{\bm{\psi}}^{(k)}$ is critical or extended 
if $\lim_{k\!\rightarrow\!\infty}U_\mathrm{max}^{(k)} \! = \! 0$.
We write $H_{0,N(k)}^P$ 
for $H_0$ of~\eqref{eq:LSE} with the length $N(k)$ and
with the periodic boundary condition.
The spectrum of $H_{0,N(k)}^P$ 
is given by $\sigma(H_{0,N(k)}^P) \! := \! \{\mathcal{E}_1,\ldots,\mathcal{E}_{N(k)}\}$.

\begin{thm}
Consider the nonlinear Schr\"odinger equation~\eqref{eq:gp} with periodic/Dirichlet boundary condition.
If a sequence $\{ \tilde{\bm{\psi}}^{(k)} \}_{k \! = \! 1}^\infty$
of the solutions is critical or extended in the infinite-length limit of the chain, 
$k\!\rightarrow\!\infty$, then the distance between
the eigenenergy $\tilde{E}^{(k)}$ of $\tilde{\bm{\psi}}^{(k)}$
and the spectrum $\sigma(H_{0,N(k)}^P)$ 
must go to zero in the limit $k\!\rightarrow\!\infty$.
\label{thm:zeroeffect}
\end{thm}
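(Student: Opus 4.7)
The plan is to treat each $\tilde{\bm\psi}^{(k)}$ as an \emph{approximate} eigenvector of the periodic linear Hamiltonian $H_{0,N(k)}^P$ and then invoke the elementary bound, valid for any self-adjoint operator $A$ on a finite-dimensional space and any unit vector $v$,
\[
\mathrm{dist}\bigl(\lambda,\sigma(A)\bigr)\le \|(A-\lambda)v\|.
\]
The starting point is to recast the nonlinear equation~\eqref{eq:gp} as a linear eigenvalue problem,
\[
\bigl(H_{0,N(k)}^{\mathrm{BC}} + g U^{(k)}\bigr)\tilde{\bm\psi}^{(k)} = \tilde E^{(k)}\tilde{\bm\psi}^{(k)},
\]
where $U^{(k)} = \mathrm{diag}(U_1^{(k)},\ldots,U_{N(k)}^{(k)})$ and $H_{0,N(k)}^{\mathrm{BC}}$ denotes the linear Hamiltonian carrying whichever boundary condition (periodic or Dirichlet) $\tilde{\bm\psi}^{(k)}$ satisfies.

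I would then estimate $\|(H_{0,N(k)}^P - \tilde E^{(k)})\tilde{\bm\psi}^{(k)}\|$. In the periodic case the residual is simply $-gU^{(k)}\tilde{\bm\psi}^{(k)}$, whose norm is at most $gU_{\mathrm{max}}^{(k)}$. In the Dirichlet case there is an additional boundary-mismatch term $(H_{0,N(k)}^P - H_{0,N(k)}^D)\tilde{\bm\psi}^{(k)}$; since the two Hamiltonians differ only by the wrap-around hopping between sites $1$ and $N(k)$, this term is supported at those two endpoints with values $t\tilde\psi_{N(k)}^{(k)}$ and $t\tilde\psi_1^{(k)}$, so its $\ell^2$ norm is at most $t\sqrt{|\tilde\psi_1^{(k)}|^2 + |\tilde\psi_{N(k)}^{(k)}|^2}\le t\sqrt{2U_{\mathrm{max}}^{(k)}}$. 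Combining,
\[
\|(H_{0,N(k)}^P - \tilde E^{(k)})\tilde{\bm\psi}^{(k)}\|\le gU_{\mathrm{max}}^{(k)} + t\sqrt{2U_{\mathrm{max}}^{(k)}}.
\]

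The hypothesis that the sequence is critical or extended means $U_{\mathrm{max}}^{(k)}\to 0$, so both terms on the right vanish in the limit. Applying the spectral distance bound to the self-adjoint operator $H_{0,N(k)}^P$ and the unit vector $\tilde{\bm\psi}^{(k)}$ then gives $\mathrm{dist}(\tilde E^{(k)},\sigma(H_{0,N(k)}^P))\to 0$, which is the claim.

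The main obstacle I anticipate is the Dirichlet case. The min-max strategy used in Theorem~\ref{thm:forbidden} compares eigenvalues of operators sharing a boundary condition, and thus would only yield closeness of $\tilde E^{(k)}$ to $\sigma(H_{0,N(k)}^D)$ rather than to the periodic spectrum appearing in the statement. The approximate-eigenvector viewpoint sidesteps this: the boundary-hopping mismatch between $H_{0,N(k)}^P$ and $H_{0,N(k)}^D$ acts only through the endpoint amplitudes $|\tilde\psi_1^{(k)}|^2$ and $|\tilde\psi_{N(k)}^{(k)}|^2$, and the non-localization hypothesis $U_{\mathrm{max}}^{(k)}\to 0$ forces exactly these amplitudes to be small. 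Hence the same argument handles both boundary conditions uniformly.
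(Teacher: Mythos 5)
Your proposal is correct and follows essentially the same route as the paper's proof: the paper's quadratic-form computation $\langle\tilde{\bm\psi}^{(k)},(\tilde E^{(k)}-H_{0,N(k)}^P)^2\tilde{\bm\psi}^{(k)}\rangle\ge[\mathrm{dist}(\sigma(H_{0,N(k)}^P),\tilde E^{(k)})]^2$ together with its Schwarz-inequality bound on the cross term is exactly your approximate-eigenvector estimate $\mathrm{dist}(\tilde E^{(k)},\sigma(H_{0,N(k)}^P))\le\|(H_{0,N(k)}^P-\tilde E^{(k)})\tilde{\bm\psi}^{(k)}\|\le |g|U_{\mathrm{max}}^{(k)}+|t|\sqrt{|\tilde\psi_1^{(k)}|^2+|\tilde\psi_{N(k)}^{(k)}|^2}$ written in norm language. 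The only cosmetic difference is that you should keep $|g|$ and $|t|$ since the theorem does not restrict their signs.
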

\begin{proof}
First consider the Hamiltonian $\tilde{H}_{N(k)}^D$ for the linear Schr\"odinger equation 
\eqref{eq:LSEU} with the Dirichlet boundary condition 
in the proof of Theorem~\ref{thm:forbidden}
with the additional potential $U_i^{(k)}$ and with the length $N(k)$. 
Clearly, $\tilde{\bm{\psi}}^{(k)}$ is the eigenvector of $\tilde{H}_{N(k)}^D$
with the eigenvalue $\tilde{E}^{(k)}$. This yields
\begin{equation}
\left \langle \tilde{\bm{\psi}}^{(k)},\left ( \tilde{H}_{N(k)}^D-H_{0,N(k)}^P\right)^2 
\tilde{\bm{\psi}}^{(k)} \right \rangle 
=\left \langle \tilde{\bm{\psi}}^{(k)},\left(\tilde{E}^{(k)}-H_{0,N(k)}^P\right)^2 
\tilde{\bm{\psi}}^{(k)}\right\rangle.
\label{E-H0P} 
\end{equation}
In order to obtain the lower bound for the right-hand side, we introduce 
the system of the complete orthonormal eigenvectors $\{\textbf{u}_i^{(k)}\}$ 
for $H_{0,N(k)}^P$. Namely, 
\begin{equation}
H_{0,N(k)}^P\textbf{u}_i^{(k)}={\cal E}_i^{(k)}\textbf{u}_i^{(k)}
\end{equation}
with the eigenvalues ${\cal E}_i~{(k)}$. Then, one has the expansion, 
\begin{equation} 
\tilde{\bm{\psi}}^{(k)}=\sum_i a_i^{(k)}\textbf{u}_i^{(k)}\quad \mbox{with} \ \ 
\sum_i \left|a_i^{(k)}\right|^2=1.
\end{equation}
Using this, the right-hand side of (\ref{E-H0P}) is evaluated as 
\begin{align}
\left \langle \tilde{\bm{\psi}}^{(k)},\left(\tilde{E}^{(k)}-H_{0,N(k)}^P\right)^2 
\tilde{\bm{\psi}}^{(k)}\right\rangle&=
\sum_i \left({\tilde E}^{(k)}-{\cal E}_i^{(k)}\right)^2\left|a_i^{(k)}\right|^2\nonumber \\ 
&\ge \min_i\left\{\left({\tilde E}^{(k)}-{\cal E}_i^{(k)}\right)^2\right\}.
\end{align}
Substituting this into the right-hand side of (\ref{E-H0P}), we obtain 
\begin{equation}
\left \langle \tilde{\bm{\psi}}^{(k)},\left ( \tilde{H}_{N(k)}^D-H_{0,N(k)}^P\right)^2 
\tilde{\bm{\psi}}^{(k)} \right \rangle
\ge \left[\mathrm{dist}(\sigma(H_{0,N(k)}^P),\tilde{E}^{(k)})\right]^2.
\label{LBvar}
\end{equation}

Next, let us obtain an upper bound for the left-hand side of~\eqref{LBvar}.
Note that $\tilde{H}_{N(k)}^D \! - \! H_{0,N(k)}^P \! = \! gU^{(k)} \! + \! \Delta H_{0,N(k)}$,
where $\Delta H_{0,N(k)} \! = \! H_{0,N(k)}^D \! - \! H_{0,N(k)}^P$, i.e., the deference between
the Hamiltonian $H_0$~\eqref{eq:LSE} with the Dirichlet and with the periodic boundary conditions.
Clearly, we have
\begin{align}
\left\langle\tilde{\bm{\psi}}^{(k)},\left(\tilde{H}_{N(k)}^D-H_{0,N(k)}^P\right)^2
\tilde{\bm{\psi}}^{(k)}\right\rangle 
&
= \left\langle\tilde{\bm{\psi}}^{(k)},g^2(U^{(k)})^2\tilde{\bm{\psi}}^{(k)}\right\rangle
+\left\langle\tilde{\bm{\psi}}^{(k)},\Delta H_{0,N(k)}^2\tilde{\bm{\psi}}^{(k)} \right\rangle
\nonumber \\
&
+ \left \langle \tilde{\bm{\psi}}^{(k)}, 
\left ( g U^{(k)} \Delta H_{0,N(k)} + \Delta H_{0,N(k)} g U^{(k)} \right ) 
\tilde{\bm{\psi}}^{(k)} 
\right \rangle. 
\label{eq:distance}
\end{align}
Since $g U^{(k)}$ and $\Delta H_{0, N(k)}$ are self-adjoint, Schwarz inequality yields
\begin{equation}
\left | \left \langle \tilde{\bm{\psi}}^{(k)}, 
g U^{(k)} \Delta H_{0,N(k)} 
\tilde{\bm{\psi}}^{(k)} 
\right \rangle \right |^2
\le \left \langle \tilde{\bm{\psi}}^{(k)}, g^2 \left (U^{(k)} \right )^2 \tilde{\bm{\psi}}^{(k)} \right \rangle 
\left \langle \tilde{\bm{\psi}}^{(k)}, \Delta H_{0,N(k)}^2 \tilde{\bm{\psi}}^{(k)} \right \rangle. 
\end{equation}
Substituting this into the right-hand side of \eqref{eq:distance}, we have 
\begin{align}
\nonumber
&\left \langle \tilde{\bm{\psi}}^{(k)},\left(\tilde{H}_{N(k)}^D-H_{0,N(k)}^P\right)^2
\tilde{\bm{\psi}}^{(k)} \right \rangle \\ \nonumber
&\le \left \langle \tilde{\bm{\psi}}^{(k)},g^2(U^{(k)})^2\tilde{\bm{\psi}}^{(k)}\right\rangle
+\left\langle\tilde{\bm{\psi}}^{(k)},\Delta H_{0,N(k)}^2\tilde{\bm{\psi}}^{(k)}\right\rangle
\nonumber \\
&+2\sqrt{\left\langle\tilde{\bm{\psi}}^{(k)},g^2(U^{(k)})^2\tilde{\bm{\psi}}^{(k)}\right\rangle
\left\langle\tilde{\bm{\psi}}^{(k)},\Delta H_{0,N(k)}^2\tilde{\bm{\psi}}^{(k)}\right\rangle}
\nonumber\\
&= \left[ \sqrt{\left\langle\tilde{\bm{\psi}}^{(k)},g^2(U^{(k)})^2\tilde{\bm{\psi}}^{(k)}\right\rangle}
+\sqrt{\left\langle\tilde{\bm{\psi}}^{(k)},\Delta H_{0,N(k)}^2\tilde{\bm{\psi}}^{(k)}\right\rangle}
\right]^2
\nonumber \\
& \le \left[|g|U_\mathrm{max}^{(k)}+|t|\sqrt{|\tilde{\psi}_1^{(k)}|^2+|\tilde{\psi}_{N(k)}^{(k)}|^2}
\right]^2.
\end{align}
Combining this with the above bound~\eqref{LBvar}, we obtain
\begin{align}
\mathrm{dist}(\sigma(H_{0,N(k)}^P),\tilde{E}^{(k)})\le
|g|U_\mathrm{max}^{(k)}+|t|\sqrt{|\tilde{\psi}_1^{(k)}|^2+|\tilde{\psi}_{N(k)}^{(k)}|^2}.
\label{distBound}
\end{align}
If $\tilde{\bm{\psi}}^{(k)}$ is critical or extended, then this right-hand side is vanishing
as $k \! \rightarrow \! \infty$.

Clearly, the same statement holds with
the periodic boundary condition. In this case, the second term in the right-hand side of
\eqref{distBound} does not appear since $\Delta H_{0, N(k)} \! = \! 0$. 
\end{proof}

We recall the well known fact that the spectrum $\sigma(H_{0,\infty}^P)$ of the linear model 
in the infinite-length limit is singular continuous and has 
zero Lebesgue measure \cite{Kohmoto_PRL_1983_v0, Kohmoto_Physics-Letters-A_1984_v0, Suto_JSP_1989_v0}. 
Theorem~\ref{thm:zeroeffect} states that all of the eigenenergies of critical or extended states 
in the nonlinear model fall into the set $\sigma(H_{0,\infty}^P)$. 
This implies that the spectrum of critical or extended states 
in the nonlinear model in the infinite-length limit is a subset of $\sigma(H_{0,\infty}^P)$, 
and has zero Lebesgue measure. In general linear models, a set of extended states is 
defined to have a spectrum having nonvanishing Lebesgue measure.  
Therefore, in this sense, there is no extended state in the present nonlinear model. 
However, we cannot conclude, from Theorem~\ref{thm:zeroeffect}, that 
critical states indeed exist in the nonlinear model, and that 
all of the critical states in the linear model survive switching on the nonlinearity.  
This expectation is supported by our numerical results. Actually, as shown in Fig.~\ref{fig:E-g}, 
each of the critical states is continuously connected to that in the linear model for 
varying the strength of the nonlinearity. 
We also remark that eigenenergies of surface states due to the Dirichlet boundary condition 
can appear outside the spectrum $\sigma(H_{0,N(k)}^P)$ in general.

The quantity $U_i=|\tilde{\psi}_i|^2$ which we introduced in the proof of Theorem~\ref{thm:forbidden}
can be interpreted as an effective potential due to the nonlinearity.  
Since $U_i$ is vanishing for critical or extended states in the infinite-length limit, 
one might think that Theorem~\ref{thm:zeroeffect} is a trivial consequence of this fact. 
This is not true because the effect of the nonlinearity for the whole chain is estimated by 
$g\sum_iU_i$. Using the normalization condition $\sum_i | \tilde{\psi}_i|^2 \! = \! 1$, 
one has $g\sum_iU_i=g\sum_i|{\tilde\psi}_i|^2=g$. 
Thus, for a fixed $g$, the effect of the nonlinearity is ${\cal O}(1)$ 
irrespective of the length of the chain.
We numerically check that 
the difference between critical wavefunctions $\bm{\psi}_{g \! = \! 0}$ 
for $g \! = \! 0$ and $\bm{\psi}_{g \! \ne \! 0}$ for $g \! \ne \! 0$ is ${\cal O} (1)$ 
in the sense of norm $||\bm{\psi}_{g\!\ne\!0} \! - \! \bm{\psi}_{g\!=\!0}||$ 
irrespective of the chain length. 
Here, $\bm{\psi}_{g \! \ne \! 0}$ is continuously connected with $\bm{\psi}_{g \! = \! 0}$. 
For a typical $\bm{\psi}_{g \! \ne \! 0}$, 
we have $|| \bm{\psi}_{g \! \ne \! 0} \! - \! \bm{\psi}_{g \! = \! 0} || \! \simeq \! 0.208$ 
for $g \! \simeq \! 0.098$ and $\ell \! = \! 13$, where $\ell$ is defined 
in Sec.~\ref{sec:Preliminary}. Here the eigenenergy 
of $\bm{\psi}_{g \! = \! 0}$ is $E \! \simeq \! -0.187$. 

Roughly speaking, nonlinearity does not change the character of eigenstates 
of a linear Schr\"odinger equation. Since the property of the on-site potential 
is not used in the proof of Theorem~\ref{thm:zeroeffect}, one can expect that 
nonlinearity does not change the localization character of eigenstates of 
a random linear Schr\"odinger equation, too. 
Actually, we can justify this type of statement for a certain class of nonlinear 
Schr\"odinger equations. The precise statement of the theorem and its proof 
are given in \ref{AppendixA}. 

We write ${\tilde \Gamma}_\infty$ for the forbidden region ${\tilde \Gamma}$ of 
\eqref{eq:gammatilde} in the infinite-length limit, i.e., 
${\tilde\Gamma}_\infty:=\lim_{N\rightarrow\infty}{\tilde \Gamma}$.
Theorem~\ref{thm:surface} below states that
the forbidden region for the spectrum of \eqref{eq:gp} 
with the Dirichlet boundary condition 
is identical to the forbidden region $\tilde{\Gamma}_\infty$ for the periodic boundary condition 
in the infinite-length limit except for the spectrum of the surface states. 
In other words, the spectrum of the surface states can appear on $\tilde{\Gamma}_\infty$.

\begin{thm}
Let $g \!\ge\! 0$. 
Let $\{\tilde{\bm{\psi}}^{(k)}\}_{k \! = \! 1}^\infty$ be a sequence of the solutions
of the nonlinear Schr\"odinger equation~\eqref{eq:gp} with the Dirichlet boundary condition and
with the eigenenergy $\tilde{E}^{(k)}$ such that
\begin{align}
\lim_{k\rightarrow\infty}\left|\tilde{\psi}_1^{(k)}\right|=
\lim_{k\rightarrow\infty}\left|\tilde{\psi}_{N(k)}^{(k)}\right|=0.
\label{zerosurface}
\end{align}
Then,
\begin{align}
\lim_{k\rightarrow\infty}\mathrm{dist}(\mathbb{R}\setminus\Gamma,\tilde{E}^{(k)})=0, 
\label{distGammacEzero}
\end{align}
where $\Gamma$ is given by \eqref{eq:gamma} and 
$\mathbb{R}\setminus\Gamma$ is the complement of $\Gamma$. 
Namely, in the infinite-length limit of the chain, 
\eqref{eq:gp} has no solution in the forbidden region $\tilde{\Gamma}_\infty$ 
except for the surface states which are localized at the surface $\{1,N\}$.
\label{thm:surface}
\end{thm}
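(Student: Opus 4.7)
The plan is to reduce to the periodic setting by comparing two linear Hamiltonians that share a common effective potential, exploiting the fact that the Dirichlet--periodic difference is a rank-two boundary operator controlled by the very quantities $\tilde{\psi}_1^{(k)}$ and $\tilde{\psi}_{N(k)}^{(k)}$ assumed to vanish. Concretely, I would freeze the potential $U_i^{(k)}=|\tilde{\psi}_i^{(k)}|^{2}$ generated by the nonlinear solution and introduce the two linear operators $\tilde{H}_{N(k)}^{D}=H_{0,N(k)}^{D}+gU^{(k)}$ and $\tilde{H}_{N(k)}^{P}=H_{0,N(k)}^{P}+gU^{(k)}$, noting that $\tilde{\bm{\psi}}^{(k)}$ is an eigenvector of the former with eigenvalue $\tilde{E}^{(k)}$.

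The first step is to apply the min-max argument from the proof of Theorem~\ref{thm:forbidden} to the linear operator $\tilde{H}_{N(k)}^{P}$. Because $0\le gU^{(k)}\le g$, its eigenvalues $\tilde{\mathcal{E}}_n$ satisfy $\mathcal{E}_n\le \tilde{\mathcal{E}}_n \le \mathcal{E}_n+g$, hence $\sigma(\tilde{H}_{N(k)}^{P})\subset \mathbb{R}\setminus\Gamma$; the spectrum of the comparison operator lies entirely inside the allowed region for the chain of length $N(k)$.

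The second step is to turn the boundary smallness into a spectral distance bound. Writing $(\tilde{H}_{N(k)}^{D}-\tilde{H}_{N(k)}^{P})\tilde{\bm{\psi}}^{(k)}=(\tilde{E}^{(k)}-\tilde{H}_{N(k)}^{P})\tilde{\bm{\psi}}^{(k)}$ and noting that the left-hand side equals $\Delta H_{0,N(k)}\tilde{\bm{\psi}}^{(k)}$, which has only the two nonzero entries $-t\tilde{\psi}_{N(k)}^{(k)}$ at site $1$ and $-t\tilde{\psi}_1^{(k)}$ at site $N(k)$, the spectral theorem applied to the self-adjoint operator $\tilde{H}_{N(k)}^{P}$ yields $\mathrm{dist}(\sigma(\tilde{H}_{N(k)}^{P}),\tilde{E}^{(k)})\le |t|\sqrt{|\tilde{\psi}_1^{(k)}|^{2}+|\tilde{\psi}_{N(k)}^{(k)}|^{2}}$. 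Combining with the first step gives $\mathrm{dist}(\mathbb{R}\setminus\Gamma,\tilde{E}^{(k)})\le |t|\sqrt{|\tilde{\psi}_1^{(k)}|^{2}+|\tilde{\psi}_{N(k)}^{(k)}|^{2}}$, which vanishes in the limit $k\to\infty$ by hypothesis~\eqref{zerosurface}, yielding~\eqref{distGammacEzero}.

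The main obstacle is conceptual rather than computational: the analogous bound in Theorem~\ref{thm:zeroeffect} was obtained against $\sigma(H_{0,N(k)}^{P})$ and therefore carried the error term $|g|U_{\max}^{(k)}$, which does not vanish for solitons. The essential trick here is to replace the target set by $\sigma(\tilde{H}_{N(k)}^{P})$, so that the common $gU^{(k)}$ piece cancels in the difference operator and only the two-site boundary contribution $\Delta H_{0,N(k)}$ survives; verifying that this new target still lies inside $\mathbb{R}\setminus\Gamma$ is precisely what forces one to invoke Theorem~\ref{thm:forbidden} applied to the linear comparison problem.
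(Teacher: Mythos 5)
Your proposal is correct and follows essentially the same route as the paper's own proof: both compare $\tilde{E}^{(k)}$ to the spectrum of the periodic comparison operator $\tilde{H}_{N(k)}^{P}=H_{0,N(k)}^{P}+gU^{(k)}$, use the min-max principle (as in Theorem~\ref{thm:forbidden}) to place $\sigma(\tilde{H}_{N(k)}^{P})$ inside $\mathbb{R}\setminus\Gamma$, and bound $\mathrm{dist}(\sigma(\tilde{H}_{N(k)}^{P}),\tilde{E}^{(k)})$ by $|t|\sqrt{|\tilde{\psi}_1^{(k)}|^{2}+|\tilde{\psi}_{N(k)}^{(k)}|^{2}}$ via the rank-two boundary difference. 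Your write-up simply makes explicit the cancellation of the $gU^{(k)}$ term that the paper leaves implicit in its reference to ``the same argument as in the proof of Theorem~\ref{thm:zeroeffect}.''
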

\begin{proof}
Write $\tilde{H}_{N(k)}^P$ for the Hamiltonian $\tilde{H}$ in the proof of
Theorem~\ref{thm:forbidden} with the additional potential $U_i^{(k)} \! = \! |\tilde{\psi}_i^{(k)}|^2$
and with the chain length $N \! = \! N(k)$. Then the min-max principle yields
$\sigma(\tilde{H}_{N(k)}^P) \cap \Gamma \! = \! \emptyset$ for any $k$.
On the other hand, the same argument as in the proof of Theorem~\ref{thm:zeroeffect} yields
\begin{align}
\mathrm{dist}(\sigma(\tilde{H}_{N(k)}^P),\tilde{E}^{(k)}) \le |t|
\sqrt{|\tilde{\psi}_1^{(k)}|^2+|\tilde{\psi}_{N(k)}^{(k)}|^2}.
\end{align}
From the assumption~\eqref{zerosurface}, this right-hand side is vanishing
as $k \! \rightarrow \! \infty$. These imply the desired result~\eqref{distGammacEzero}.
\end{proof}

\begin{figure}[!ht]
\centering
\includegraphics[width=\figsizec]{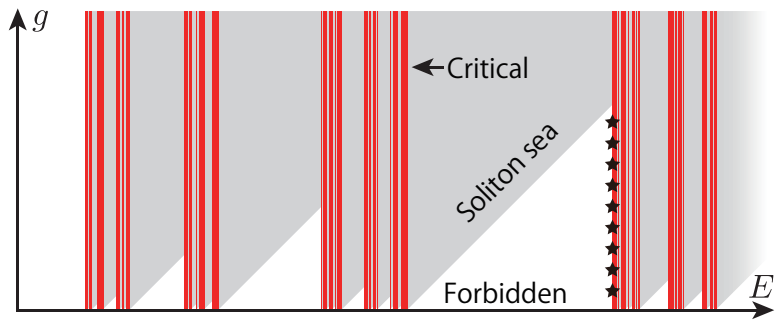}
\caption
[phase diagram]
{(color online) The spectral structure which 
consists of
the Cantor spectrum of 
critical states (vertical red lines), the sea 
of 
stationary solitons (gray),
and the forbidden region, $\tilde{\Gamma_\infty}$ (non-colored), 
in $E$-$g$ plane with the periodic boundary condition.
An example of a region where experimentally detectable critical states exist are marked as stars.}
\label{fig:phase}
\end{figure}

\section{Phase diagram of the energy spectrum}
\label{PhaseDiagram} 
Let us describe the spectral properties of
the nonlinear Schr\"odinger equation~\eqref{eq:gp} in the limit $\ell\!\rightarrow\!\infty$.
A schematic phase diagram
is shown in Fig.~\ref{fig:phase}.
It consists of three portions, the Cantor set
for the critical states, the forbidden region $\tilde{\Gamma}_\infty$, and
the soliton sea.
From the numerical results, we confirm that there exist critical states
with the finite coupling~$g$~(Fig.~\ref{fig:fa}~(b)). 
Combining this with
Theorem~\ref{thm:zeroeffect}, the Cantor set for 
the critical states is shown as vertical (red) lines in Fig.~\ref{fig:phase}.
We numerically found solitons due to the nonlinear 
effects (Fig.~\ref{fig:psi}~(d)).
The point spectrum of the solitons must be distributed outside the forbidden region $\tilde{\Gamma}_\infty$
from Theorem~\ref{thm:forbidden} and~\ref{thm:surface}.
We cannot exclude the possibility that an eigenenergy of a soliton lies just on the Cantor set.
It is obvious that surface states are absent 
with the periodic boundary condition, and not shown in Fig.~\ref{fig:phase}.

\section{Summary and conclusion}
\label{SumCon} 
We studied the stationary states for the nonlinear Schr\"odinger equation 
on the Fibonacci optical lattice. 
We found that the nonlinearity does not destroy 
the critical states 
which exist in the absence of nonlinearity and exhibit fractal properties. 
The existence of these states was confirmed numerically using multifractal analysis. 
To our knowledge, this kind of analysis is applied to the field of BEC for the first time. 
We also showed that the energy spectrum of the critical states remains 
intact irrespective of the strength of the nonlinearity. 
Besides 
the critical states, there is a large number of localized solutions, solitons, resulting 
from the nonlinearity. These solitons may seem an obstacle to observing 
critical states. 
However, we found the forbidden region for solitons, 
in the neighborhood of which the experimental detection of 
critical states is expected to be possible. 
Our analysis is intended to stimulate such an experimental effort 
to observe exotic critical states in optical lattices.

The nonlinear Schr\"odinger equation~\eqref{eq:gp} is nothing but the discrete Gross-Pitaevskii 
equation for BEC. Therefore, the chemical potential $\mu$ is equal to some eigenenergy $E$. 
In real experiments, the controllable parameters are the total number of the particles 
and the coupling constant $g$. 
In the nonlinear Schr\"odinger equation~\eqref{eq:gp}, these two parameters appear as 
a single parameter $g$ which is the effective coupling constant 
under a normalization condition of the wavefunctions. 
Within a mean field approximation for many-body BEC systems, 
the effective single-body state which has the lowest internal energy 
is most likely to be realized as the ground state.  
We numerically checked, for relatively small effective couplings $g$ (up to $1$), 
that the ground state is given by the eigenstate of \eqref{eq:gp} with the lowest eigenenergy $E$. 
Then, $E$ is identical to the corresponding the chemical potential $\mu$. 
When a chemical potential $\mu$ is given instead of the total number of the particles, 
we can also numerically determine the internal energy.  

The method presented in this paper equally applies to a higher  dimensional model 
in which the on-site potential in each direction is arranged by a generic quasiperiodic rule 
such as the Fibonacci rule. Actually, the eigenstates for the corresponding linear Schr\"odinger 
equation have a product form of the one-dimensional eigenstates. 
(See, for example, \cite{Ueda_PRL_1987_v0, Schwalm_PRB_1988_v0, Ashraff_PRB_1990_v0}.)
This implies that, in order to detect critical states or fractal wavefunctions on BEC,  
experimentalists do not necessarily need to stick to a one-dimensional system.
Also our method is applicable to a wide class of nonlinear Schr\"odinger equations 
for studying the properties of stationary states. 
For example, we can treat bichromatic on-site potentials~\cite{Larcher_PRA_2009_v0} 
which is considered as the Harper equation~\cite{Harper_Proceedings-of-the-Physical-Society.-Section-A_1955_v0, Aubry_Ann.-Isr.-Phys.-Soc._1980_v0}, 
two different hopping integrals arranged in the Fibonacci sequence~\cite{Kohmoto_PRB_1987_v0, Fujiwara_PRB_1989_v0}, 
and different types of nonlinearities
such as the Ablowitz-Ladik one~\cite{Ablowitz_JMP_1976_v0}.

\ack
The authors thank 
Yuta Masuyama, Takeshi Mizushima, Fumihiko Nakano, Mark Sadgrove, Hal Tasaki, and Satoshi Tojo for valuable discussions. 
This work was supported by Grant-in-Aid for Research Activity Start-up (23840034) 
and Grant-in-Aid for Young Scientists (B) (23740298).

\appendix
\section{Nonlinear effects for localization}
\label{AppendixA}

Although Theorem~\ref{thm:random} below holds for a wide class of nonlinear 
Schr\"odinger equations which have a localization regime in the spectrum of 
the corresponding linear Schr\"odinger equation, 
we consider a random nonlinear Schr\"odinger equation in one dimension 
as a concrete example. 
The nonlinear Schr\"odinger equation is given by replacing the on-site 
Fibonacci potential with a random potential in \eqref{eq:gp}. 
As is well known, all the eigenstates in the corresponding linear system 
are localized for a general class of randomness 
in one dimension.\footnote{See, e.g., the book~\cite{1DRSE}.}
As to the nonlinear eigenvalue problem, a certain set of localized 
eigenstates is proved to exist for more general 
setting~\cite{Albanese_Communications-in-Mathematical-Physics_1988_v0,
Albanese_Communications-in-Mathematical-Physics_1988_v1,
Albanese_Communications-in-Mathematical-Physics_1991_v0}.  
See also a related article~\cite{Frohlich_Journal-of-Statistical-Physics_1986_v0}.

In the following, we will prove that there is no stationary solution of the nonlinear 
Schr\"odinger equation such that the solution exhibits conventional properties of 
critical or extended states. 
Unfortunately, we cannot exclude the existence of certain pathological 
states which are not localized. We believe that such pathological states 
cannot appear for standard systems.  
Thus, the localization of the stationary states are expected to survive 
switching on the nonlinearity \cite{1DRE}.  
Clearly, our result is consistent with 
the previous results\cite{Albanese_Communications-in-Mathematical-Physics_1988_v0,
Albanese_Communications-in-Mathematical-Physics_1988_v1,
Albanese_Communications-in-Mathematical-Physics_1991_v0}. 

Let $\{\Lambda(k)\}_{k=1}^\infty$ be a sequence of finite lattices satisfying 
$$
\Lambda(1)\subset\Lambda(2)\subset \cdots \subset \Lambda(k) \subset \cdots .
$$
Let $\tilde{\bm{\psi}}^{(k)} \! = \! (\tilde{\psi}_1^{(k)}, \tilde{\psi}_2^{(k)}, \ldots,
\tilde{\psi}_{N(k)}^{(k)})$ be a stationary solution of the nonlinear Schr\"odinger equation 
on the lattice $\Lambda(k)$ with the eigenenergy $\tilde{E}^{(k)}$, 
where $N(k)$ is the number of the sites in $\Lambda(k)$. 
We choose the sequence $\{\Lambda(k)\}_{k=1}^\infty$ 
so that the eigenenergy $\tilde{E}^{(k)}$ 
converges to some value ${\tilde E}$ in the infinite-volume limit $k\rightarrow\infty$. 
Set 
\begin{equation}
\bm{\phi}^{(k)}:=(\phi_1^{(k)},\phi_2^{(k)},\ldots,\phi_{N(k)}^{(k)})
:=\frac{1}{\max_i{|\tilde{\psi}_i^{(k)}|}}\tilde{\bm{\psi}}^{(k)}.
\end{equation} 
For the wavefunction $\tilde{\bm{\psi}}^{(k)}$, we introduce the following two conditions:  
\begin{equation}
\lim_{k'\rightarrow\infty}\lim_{k\rightarrow\infty}\sum_{i\in\Lambda(k')}
\left|\phi_i^{(k)}\right|^2=\infty,
\label{1con}  
\end{equation}
and there exists a finite lattice $\Omega$ such that 
\begin{equation}
\mathop{\lim\inf}_{k\rightarrow\infty}\max_{i\in\Omega}\left|\phi_i^{(k)}\right|>0.
\label{2con} 
\end{equation}
The former condition (\ref{1con}) implies that the wavefunction $\tilde{\bm{\psi}}^{(k)}$ is 
not localized, and it does not split into two portions, a localized part and 
the rest, such that the distance between two portions becomes 
infinity in the infinite-volume limit. In fact, if   
\begin{equation}
\lim_{k'\rightarrow\infty}\lim_{k\rightarrow\infty}\sum_{i\in\Lambda(k')}
\left|\phi_i^{(k)}\right|^2<\infty,  
\end{equation}
then $\bm{\phi}^{(k)}$ converges to a localized state in the infinite-volume limit 
even for a critical or extended state ${\tilde{\bm{\psi}}}^{(k)}$. 

The latter condition (\ref{2con}) implies that the wavefunction $\bm{\phi}^{(k)}$ 
does not disappear from finite regions. 

We also consider the corresponding linear Schr\"odinger equation, 
and write $H_{0,N(k)}$ for the Hamiltonian. The statement is given in a generic form 
as follows: 

\begin{thm} 
Let $\tilde{\bm{\psi}}^{(k)}$ be a stationary solution 
of the nonlinear Schr\"odinger equation. 
Suppose that the eigenenergy ${\tilde E}$ in the infinite-volume limit 
is an interior point of the localization 
regime in the energy spectrum of the corresponding linear Schr\"odinger equation. 
Then the solution $\tilde{\bm{\psi}}^{(k)}$ cannot simultaneously satisfy the above two 
conditions (\ref{1con}) and (\ref{2con}).  
In other words, if a stationary solution is purely critical or extended in the sense of 
(\ref{1con}) and satisfies the energy condition, 
then its dominant part in the sense of the absolute value of the wavefunction 
cannot appear in any finite region. 
\label{thm:random}
\end{thm}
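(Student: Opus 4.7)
My plan is to argue by contradiction: assume $\tilde{\bm\psi}^{(k)}$ satisfies both (\ref{1con}) and (\ref{2con}), extract an infinite-volume pointwise limit of the rescaled wavefunction $\bm\phi^{(k)}$, and show that this limit is a nonzero, bounded, but non-$\ell^2$ solution of the \emph{linear} Schr\"odinger equation $H_{0,\infty}\bm\phi=\tilde{E}\bm\phi$. Since no such solution exists when $\tilde{E}$ lies in the interior of the localization regime, this is the contradiction. The opening step reuses the device from the proof of Theorem~\ref{thm:zeroeffect}: $\bm\phi^{(k)}$ is an eigenvector at eigenvalue $\tilde{E}^{(k)}$ of the \emph{linear} operator $H_{0,N(k)}+gU^{(k)}$ with effective potential $U^{(k)}=\mathrm{diag}(|\tilde\psi_i^{(k)}|^2)$. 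From $\|\tilde{\bm\psi}^{(k)}\|=1$ one has $\|\bm\phi^{(k)}\|^2=(\max_i|\tilde\psi_i^{(k)}|)^{-2}$, and condition (\ref{1con}) forces $\|\bm\phi^{(k)}\|\to\infty$ and hence $\max_i|\tilde\psi_i^{(k)}|\to 0$, so the effective potential $gU^{(k)}$ tends to zero pointwise.

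Next, because $|\phi_i^{(k)}|\le 1$ uniformly in $i$ and $k$, a Cantor diagonal extraction produces a subsequence along which $\phi_i^{(k)}\to\phi_i^{(\infty)}$ at every site $i\in\mathbb Z$. Passing to the limit site by site in the rescaled equation, $\bm\phi^{(\infty)}$ solves $H_{0,\infty}\bm\phi^{(\infty)}=\tilde{E}\bm\phi^{(\infty)}$ and is pointwise bounded by $1$. Condition (\ref{2con}), refined by a pigeonhole step over the finite set $\Omega$, yields a site $i_\star\in\Omega$ with $|\phi_{i_\star}^{(\infty)}|\ge c>0$, so $\bm\phi^{(\infty)}\ne 0$. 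On the other hand, for each fixed $k'$, pointwise convergence on the finite set $\Lambda(k')$ gives $\sum_{i\in\Lambda(k')}|\phi_i^{(k)}|^2\to\sum_{i\in\Lambda(k')}|\phi_i^{(\infty)}|^2$ along the subsequence, and (\ref{1con}) then forces $\|\bm\phi^{(\infty)}\|^2=+\infty$.

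The final step is to invoke localization to rule out such a $\bm\phi^{(\infty)}$. For the one-dimensional random linear operator of this appendix, the interior of the localization regime is characterized by a strictly positive Lyapunov exponent $\gamma(\tilde{E})>0$. The Oseledets/Furstenberg theorem applied to the transfer-matrix recursion then asserts that, for almost every realization, the two-dimensional space of formal solutions contains at most a one-dimensional subspace along which solutions decay like $e^{-\gamma(\tilde{E})|n|}$ at $+\infty$, and likewise at $-\infty$; every other initial datum forces exponential growth at that end. A nonzero bounded solution must therefore lie in both decaying subspaces simultaneously and thus belong to $\ell^2$. Applied to $\bm\phi^{(\infty)}$, this contradicts the non-$\ell^2$ conclusion drawn from (\ref{1con}), completing the argument.

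The hard part, in my view, is this last step: turning ``interior of the localization regime'' into a deterministic dichotomy of the form ``bounded solution $\Rightarrow\ell^2$ eigenfunction'' that is robust enough to justify the wide class of models the theorem advertises. For i.i.d.\ random potentials in one dimension this is classical via Kotani/Furstenberg theory, but extending the same conclusion to, say, the Fibonacci model of the main text would require a substitute argument such as uniform hyperbolicity of the associated trace map at $\tilde{E}$. A routine but genuine bookkeeping task is to organize the nested subsequence extractions (diagonal pointwise convergence of $\bm\phi^{(k)}$, the pigeonhole choice of $i_\star\in\Omega$, and the assumed convergence $\tilde{E}^{(k)}\to\tilde{E}$) so that the single limit $\bm\phi^{(\infty)}$ inherits all the properties needed to produce the contradiction.
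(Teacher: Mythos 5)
Your proposal follows essentially the same route as the paper's proof: rescale by the sup-norm, extract a pointwise limit by the diagonal trick, show the limit is a nonzero, bounded, non-normalizable solution of the \emph{linear} equation at energy $\tilde E$, and contradict localization; your explicit remark that (\ref{1con}) forces $\max_i|\tilde\psi_i^{(k)}|\to 0$, so that the effective potential $gU^{(k)}$ vanishes in the limit, is a useful detail the paper leaves implicit. The only divergence is in the last step, where you invoke the positive-Lyapunov-exponent dichotomy ``bounded solution $\Rightarrow$ $\ell^2$'' while the paper appeals to the generalized-eigenfunction expansion; both versions rest on the same (acknowledged) interpretation of ``interior point of the localization regime,'' so the proposals are substantively equivalent.
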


\begin{proof}
Assume that $\tilde{\bm{\psi}}^{(k)}$ satisfies the conditions (\ref{1con}) and (\ref{2con}). 
{From} this assumption, we will show that 
one can construct an extended or a critical state 
for the corresponding linear Schr\"odinger equation in the infinite-volume limit. 

{From} (\ref{2con}), there exist a subsequence $\{\Lambda(k_j)\}_{j=1}^\infty$ 
of $\{\Lambda(k)\}_{k=1}^\infty$ and 
a site $i_0\in\Omega$ such that $\phi_{i_0}^{(k_j)}$ converges to some 
nonzero value $\phi_{i_0}^{(\infty)}>0$ as $j\rightarrow\infty$.  
Therefore, we can obtain a wavefunction $\bm{\phi}^{(\infty)}$ 
in the infinite-volume limit by using the diagonal trick around the site $i_0$. 
Clearly, the wavefunction $\bm{\phi}^{(\infty)}$ is nonvanishing, and non-normalizable 
from the condition (\ref{1con}). 

In the same way as in \eqref{eq:LSEU}, we define by $\tilde{H}_{N(k)}$ the Hamiltonian 
with the additional potential $U$ which is determined by $\tilde{\psi}_i^{(k)}$. 
Then, we have $\tilde{H}_{N(k)}\bm{\phi}^{(k)} \! = \! \tilde{E}^{(k)}\bm{\phi}^{(k)}$. 

From these, we have 
\begin{align}
\lim_{j \! \rightarrow \! \infty} \left\langle \chi^{(j)},\tilde{H}_{N(\ell_j)}\bm{\phi}^{(\ell_j)}
\right\rangle
&=\left\langle \chi^{(\infty)}, H_{0,\infty}\bm{\phi}^{(\infty)} \right\rangle \nonumber \\
&=\tilde{E}\left\langle \chi^{(\infty)}, \bm{\phi}^{(\infty)} \right\rangle,
\label{eigenvalueEqA}  
\end{align}
where $\{\ell_j\}_{j=1}^\infty$ is a subsequence of $\{k_j\}_{j=1}^\infty$, and 
$\chi^{(j)}$ is a function which converges to a rapidly decreasing function 
$\chi^{(\infty)}$ in the limit $j \! \rightarrow \! \infty$, 
and $H_{0,\infty}$ is the Hamiltonian of the linear Schr\"odinger 
equation in the infinite-volume limit. 
This result implies that $\bm{\phi}^{(\infty)}$ is a generalized 
eigenvector\footnote{See, e.g., Sec.~4 of Chap.~1 in the book~\cite{Gelfand__1968_v0}.} 
of $H_{0,\infty}$.  
Since $\bm{\phi}^{(\infty)}$ is non-normalizable, this contradicts with the assumption 
that $\tilde{E}$ is an interior point of the localization regime. 
\end{proof}

The result (\ref{eigenvalueEqA}) implies that the vector 
$\bm{\phi}^{(\infty)}=(\cdots,\phi_{-1}^{(\infty)},\phi_0^{(\infty)},
\phi_1^{(\infty)},\cdots,\phi_i^{(\infty)},\cdots)$ locally satisfies 
the linear Schr\"odinger equation as 
\begin{equation}
t\left(\phi_{i+1}^{(\infty)}+\phi_{i-1}^{(\infty)}\right)+V_i\phi_i^{(\infty)}={\tilde E}\phi_i^{(\infty)}.
\end{equation}

If there exists a localization regime 
in the corresponding linear Schr\"odineger equation, then the statement of 
Theorem~\ref{thm:random} generally holds. 
For example, the Harper model has a localization regime for certain values 
of the coupling constant of the quasiperiodic potential.
Therefore, there is no stationary solution which is a conventional critical or extended state 
by switching on nonlinearity 
under the condition that the nonlinear potential $U$ vanishes in the infinite-volume limit. 
As a result, we can expect that the localization of the stationary states in the model 
is not destroyed by nonlinearity. 

Finally, we remark the following: Localization of stationary states may not lead to dynamical localization 
for nonlinear systems. (For related articles, see 
\cite{Bourgain_Diffusion-Bound-for-a-Nonlinear-Schrodinger-Equation_2007_v0, 
Pikovsky_PRL_2008_v0, Wang_JSP_2009_v0, Flach_PRL_2009_v0, Skokos_PRE_2009_v0, 
Fishman_Nonlinearity_2009_v0, Krivolapov_NJP_2010_v0}.) 
However, the existence~\cite{Albanese_Communications-in-Mathematical-Physics_1988_v0,
Albanese_Communications-in-Mathematical-Physics_1988_v1,
Albanese_Communications-in-Mathematical-Physics_1991_v0} of stationary localized states implies that, 
if nonlinearity dynamically destroys localization in a linear system,  
whether a wavefunction is dynamically localized or not strongly depends on 
the initial wavefunction. Actually, if an eigenstate is localized as 
a stationary solution of the nonlinear Schr\"odinger equation, 
then the time evolution of the state is localized, too.  

\section*{References}


\begin{thebibliography}{10}


\bibitem{Bloch_NP_2005_v0}
Bloch I 2005 
\textit{Nat. Phys.} \textbf{1} 23



\bibitem{Morsch_RMP_2006_v0}
Morsch O and Oberthaler M 
2006 \textit{Rev. Mod. Phys.} \textbf{78} 179





\bibitem{Fallani_PRL_2007_v0}
Fallani L, Lye J E, Guarrera V, Fort C and Inguscio M 
2007 \textit{Phys. Rev. Lett.} \textbf{98} 130404

\bibitem{Larcher_PRA_2009_v0}
Larcher M, Dalfovo F and Modugno M 
2009 \textit{Phys. Rev. A} \textbf{80} 053606


\bibitem{Bakr_Nature_2009_v0}
Bakr W S, Gillen J I, Peng A, Folling S and Greiner M 
2009 \textit{Nature} \textbf{462} 74

\bibitem{Kohmoto_PRL_1983_v0}
Kohmoto M, Kadanoff L P and Tang C 
1983 \textit{Phys. Rev. Lett.} \textbf{50} 1870

\bibitem{Ostlund_PRL_1983_v0}
Ostlund S, Pandit R, Rand D, Schellnhuber H J and Siggia E D 
1983 \textit{Phys. Rev. Lett.} \textbf{50} 1873 

\bibitem{Kohmoto_PRB_1987_v0}
Kohmoto M, Sutherland B and Tang C 
1987 \textit{Phys. Rev. B} \textbf{35} 1020

\bibitem{Penrose_Bull.-Inst.-Math.-Appl._1974_v0}
Penrose R 
1974 \textit{Bull. Inst. Math. Appl.} \textbf{10} 266



\bibitem{Gardner_Sci.-Am._1977_v0}
Gardner M 
1977 \textit{Sci. Am.} \textbf{236} 110


\bibitem{Suto_JSP_1989_v0}
S{\"u}t{\H o} A 
1989 \textit{J. Stat. Phys.} \textbf{56} 525

\bibitem{Levi_Science_2011_v0}
Levi L, Rechtsman M, Freedman B, Schwartz T, Manela O and Segev M 
2011 \textit{Science} \textbf{332} 1541

\bibitem{Johansson_PRB_1994_v0}
Johansson M and Riklund R 
1994 \textit{Phys. Rev. B} \textbf{49} 6587

\bibitem{Kohmoto_Physics-Letters-A_1984_v0}
Kohmoto M and Oono Y 
1984 \textit{Phys. Lett. A} \textbf{102} 145

\bibitem{Kivshar__2003_v0}
Kivshar Y S and Agrawal G P 
2003 \textit{Optical Solitons: From Fibers to Photonic Crystals}. Academic Press, London 

\bibitem{Pikovsky_PRL_2008_v0}
Pikovsky 
A S and Shepelyansky D L
 2008 \textit{Phys. Rev. Lett.} \textbf{100} 094101

\bibitem{Wang_JSP_2009_v0}
Wang W M and Zhang Z
2009 \textit{J. Stat. Phys.} \textbf{134} 953



\bibitem{Flach_PRL_2009_v0}
Flach S, Krimer D O and Skokos Ch 
2009 \textit{Phys. Rev. Lett.} \textbf{102} 024101



\bibitem{Skokos_PRE_2009_v0}
Skokos Ch, Krimer D O, Komineas S and Flach S 
2009 \textit{Phys. Rev. E} \textbf{79} 056211

\bibitem{Fishman_Nonlinearity_2009_v0}
Fishman S, Krivolapov Y and Soffer A 
2009 \textit{Nonlinearity} \textbf{22} 2861



\bibitem{Krivolapov_NJP_2010_v0}
Krivolapov Y, Fishman S and Soffer A
2010 \textit{New J. Phys.} \textbf{12} 063035

\bibitem{Fishman_Nonlinearity_2012_v0}
Fishman S, Krivolapov Y and Soffer A 
2012 \textit{Nonlinearity} \textbf{25} R53

\bibitem{Albanese_Communications-in-Mathematical-Physics_1988_v0}
Albanese C and Fr\"ohlich J 
1988 \textit{Commun. Math. Phys.} \textbf{116} 475



\bibitem{Albanese_Communications-in-Mathematical-Physics_1988_v1}
Albanese C, Fr\"ohlich J and Spencer T 
1988 \textit{Commun. Math. Phys.} \textbf{119} 677

\bibitem{Albanese_Communications-in-Mathematical-Physics_1991_v0}
Albanese C and Fr\"ohlich J
 1991 \textit{Commun. Math. Phys.} \textbf{138} 193

\bibitem{Harper_Proceedings-of-the-Physical-Society.-Section-A_1955_v0}
Harper P G 
1955 \textit{Proc. Phys. Soc. Sec. A} \textbf{68} 874

\bibitem{Aubry_Ann.-Isr.-Phys.-Soc._1980_v0}
Aubry S and Andr\`e G 1980 \textit{Ann. Isr. Phys. Soc.} \textbf{3} 133

\bibitem{Manela_New-Journal-of-Physics_2010_v0}
Manela O, Segev M, Christodoulides D N and Kip D 2010 \textit{New J. Phys.} \textbf{12} 053017

\bibitem{Halsey_PRA_1986_v0}
Halsey T C, Jensen M H, Kadanoff L P, Procaccia I and Shraiman B I 
1986 \textit{Phys. Rev. A} \textbf{33} 1141


\bibitem{Kohmoto_PRA_1988_v0}
Kohmoto M 
1988 \textit{Phys. Rev. A} \textbf{37} 1345

\bibitem{Hiramoto_IJMPB_1992_v0}
Hiramoto H and Kohmoto M 
1992 \textit{Int. J. Mod. Phys. B} \textbf{3} 281

\bibitem{Fujiwara_PRB_1989_v0}
Fujiwara T, Kohmoto M and Tokihiro T
 1989 \textit{Phys. Rev. B} \textbf{40} 7413

\bibitem{ReedSimonIV} 
Reed M and Simon B 1978 \textit{Methods of modern mathematical physics: 
Vol.~IV: Analysis of operators.} Academic Press, New York 

\bibitem{Ueda_PRL_1987_v0}
Ueda K and Tsunetsugu H 
1987 \textit{Phys. Rev. Lett.} \textbf{58} 1272

\bibitem{Schwalm_PRB_1988_v0}
Schwalm W A and Schwalm M K 
1988 \textit{Phys. Rev. B} \textbf{37} 9524

\bibitem{Ashraff_PRB_1990_v0}
Ashraff J A, Luck J-M and Stinchcombe R B 
1990 \textit{Phys. Rev. B} \textbf{41} 4314

\bibitem{Ablowitz_JMP_1976_v0}Ablowitz 
M J and Ladik J F 
1976 \textit{J. Math. Phys.} \textbf{17} 1011

\bibitem{1DRSE}
Lifshits I M, Pastur L A and Gredeskul S A 
1988 \textit{Introduction to the theory of disordered systems.} Wiley, New York   

\bibitem{Frohlich_Journal-of-Statistical-Physics_1986_v0}
Fr\"ohlich J, Spencer T and Wayne C E
1986 \textit{J. Stat. Phys.} \textbf{42} 247

\bibitem{1DRE}
McKenna M J, Stanley R L and Maynard J D 1992 \textit{Phys. Rev. Lett.} \textbf{69} 1807 

\bibitem{Gelfand__1968_v0}
Gel'fand I M and Vilenkin N Y 
1968 \textit{Generalized functions: Vol. 4: Applications of harmonic analysis.} Academic Press, London 

\bibitem{Bourgain_Diffusion-Bound-for-a-Nonlinear-Schrodinger-Equation_2007_v0}
Bourgain J and Wang W M 
2007 \textit{Diffusion Bound for a Nonlinear Schrodinger Equation.}
 \textit{In Mathematical aspects of nonlinear dispersive equations.} Princeton Univ. Press 

\end{thebibliography}

\end{document}